%% LyX 2.0.3 created this file.  For more info, see http://www.lyx.org/.
%% Do not edit unless you really know what you are doing.
\documentclass[twocolumn,american,aps,pra,showpacs]{revtex4}
\usepackage[T1]{fontenc}
\usepackage[latin9]{inputenc}
\setcounter{secnumdepth}{3}
\usepackage{babel}
\usepackage{amsthm}
\usepackage{amsmath}
\usepackage{amssymb}
\usepackage[unicode=true,pdfusetitle,
 bookmarks=true,bookmarksnumbered=false,bookmarksopen=false,
 breaklinks=false,pdfborder={0 0 0},backref=false,colorlinks=false]
 {hyperref}

\makeatletter
%%%%%%%%%%%%%%%%%%%%%%%%%%%%%% Textclass specific LaTeX commands.
\@ifundefined{textcolor}{}
{%
 \definecolor{BLACK}{gray}{0}
 \definecolor{WHITE}{gray}{1}
 \definecolor{RED}{rgb}{1,0,0}
 \definecolor{GREEN}{rgb}{0,1,0}
 \definecolor{BLUE}{rgb}{0,0,1}
 \definecolor{CYAN}{cmyk}{1,0,0,0}
 \definecolor{MAGENTA}{cmyk}{0,1,0,0}
 \definecolor{YELLOW}{cmyk}{0,0,1,0}
}
\theoremstyle{plain}
\newtheorem{thm}{\protect\theoremname}

%%%%%%%%%%%%%%%%%%%%%%%%%%%%%% User specified LaTeX commands.
\usepackage{babel}
\usepackage{babel}
\usepackage{times}\usepackage{txfonts}\usepackage{braket}\usepackage[usenames,dvipsnames]{pstricks}\usepackage{epsfig}\usepackage{pst-grad}

\makeatother

\providecommand{\theoremname}{Theorem}

\begin{document}

\title{Quantifying nonclassicality: global impact of local unitary evolutions}

\author{S. M. Giampaolo$^{2}$, A. Streltsov$^{1}$, W. Roga$^{2}$, D. Bru{ß}$^{1}$,
and F. Illuminati$^{2,3}$}

\affiliation{ $^{1}$\mbox{Heinrich-Heine-Universität Düsseldorf, Institut für
Theoretische Physik III, D-40225 Düsseldorf, Germany} \\
 $^{2}$\mbox{Dipartimento di Ingegneria Industriale, Universit\`a
degli Studi di Salerno,~Via Ponte don Melillo, I-84084 Fisciano (SA),
Italy} \\
 $^{3}$Corresponding Author: illuminati@sa.infn.it}

\date{\today}

\begin{abstract}
We show that only those composite quantum systems possessing nonvanishing
quantum correlations have the property that \emph{any} nontrivial
local unitary evolution changes their global state. We derive the
exact relation between the global state change induced by local unitary
evolutions and the amount of quantum correlations. We prove that the
minimal change coincides with the geometric measure of discord (defined
via the Hilbert-Schmidt norm), thus providing the latter with an operational
interpretation in terms of the capability of a local unitary dynamics
to modify a global state. We establish that two-qubit Werner states
are maximally quantum correlated, and are thus the ones that maximize
this type of global quantum effect. Finally, we show that similar results
hold when replacing the Hilbert-Schmidt norm with the trace norm.
\end{abstract}

\pacs{03.67.Mn, 03.65.Ud, 03.65.Ta}

\maketitle
Although the existence of quantum correlations more general than entanglement
has been known for some time~\cite{Ollivier2001,Henderson2001,Oppenheim2002},
they have begun to attract increasing interest only after the recent
suggestion that they might constitute key resources for quantum information
and computation tasks, such as the computational speed-up in the model
of deterministic quantum computation with one pure qubit (DQC1)~\cite{Knill1998}.
% . Interest in
% these and other aspects of nonclassicality beyond entanglement
% has blossomed tremendously after the recent suggestion that they
%%they may play
%%an important role, complementary and perhaps comparable to that of entanglement, in quantum
%%information science and fundamental quantum physics \cite{Merali2011}.
% one particular measure of such quantum correlations, the quantum discord~\cite{Ollivier2001},
% might be the key nonclassical feature responsible
% for the computational speed-up in the model of deterministic quantum computation
% with one pure qubit (DQC1)~\cite{Knill1998}.
In this model, use of a mixed separable state appears to allow for
the efficient, i.e. polynomial time, computation of the trace of any
$n$-qubit unitary matrix~\cite{Lanyon2008}, a problem believed
to fall in the $NP$ class on a classical computer~\cite{Laflamme2002,Datta2005}.
% as demonstrated in recent experiments~\cite{Lanyon2008}.
% This problem is believed to be not solvable efficiently on a classical
% computer~\cite{Laflamme2002,Datta2005}.
Given the absence of entanglement, and assuming the essential nonclassicality
of the protocol, this has led to suggest that a particular measure
of bipartite quantum correlations, the quantum discord \cite{Ollivier2001},
is the figure of merit for quantum computation with mixed states~\cite{Datta2008}.
Despite much progress, the issue is however not yet conclusively settled~\cite{Dakic2010,Merali2011,Chaves2011,Dakic2012}.
More recently, various operational interpretations of the quantum
discord and other measures of quantum correlations have been established~\cite{Ferraro2010,Madhok2011,Cavalcanti2011,Streltsov2011,Piani2011,Merali2011,Girolami2012,Streltsov2012,Chuan2012}.
Quantum discord in its entropic definition, i.e. as the difference
between two classically equivalent forms of mutual information \cite{Ollivier2001},
has been given its first information-theoretic operational meaning
in terms of entanglement consumption in an extended quantum-state-merging
protocol. Its asymmetry, i.e. the fact that in general the discord
between parties $A$ and $B$ given that party $A$ is measured is
different from the one given that party $B$ is measured, has been
related to the performance imbalance in quantum state merging and
dense coding~\cite{Cavalcanti2011}. The quantum discord has also
been shown to be equal to the minimal partial distillable entanglement,
that is the part of entanglement which is lost when one ignores the
subsystem which is not measured in a local projective measurement~\cite{Streltsov2011}.
Finally, a different measure of nonclassicality, the relative entropy
of quantumness, has been shown to be equivalent to the minimum distillable
entanglement generated between a system and local ancillae in a suitably
devised activation protocol~\cite{Piani2011}.

Notwithstanding these recent progresses, several fundamental questions
on the nature and properties of quantum correlations are yet to be
addressed. Among them a conceptually appealing one is determining
a unified mathematical framework for the quantification of entanglement
and quantumness. Such framework would allow to devise a basic physical
interpretation of quantum correlations and formulate sharp quantitative
questions on the ensuing measure of nonclassicality, such as the definition
and properties of maximally quantum-correlated states. In the present
work we define a distance-based measure of quantumness that for pure
states reduces to a particular distance-based measure of entanglement,
the so-called ``stellar entanglement'' \cite{Giampaolo2007,Monras2011}.
The latter associates pure-state bipartite entanglement to the minimal
change of a state induced by local unitary operations. It is a \textit{bona
fide} entanglement monotone for $M\times N$-dimensional composite
quantum systems and extends to mixed states via the convex roof construction.
Indeed, the research program on the global effects of local unitary
operations acting on composite quantum systems has turned out to be
fruitful in the investigation of various other issues \cite{Gharibian2009,Fu2006},
including the quantification of measurement-induced nonlocality \cite{Luo2011}
and the theory and applications of ground-state factorization in the
study of complex quantum systems~\cite{Giampaolo2008,Giampaolo2009,Giampaolo2010}.
Very recently, the possibility of quantifying quantum correlations
via the effect of local unitary operations has been discussed in Ref.~\cite{Gharibian2012}.

In the present work we shall show that the minimal disturbance on
mixed bipartite quantum states under the action of local unitary (Hamiltonian)
time evolutions on only one of the parties defines a faithful measure
of quantum correlations vanishing if and only if the state is classically
correlated and reducing to the stellar entanglement for pure states.
This measure enjoys a clear physical interpretation in terms of the
impact power of local unitary time-evolutions, i.e. the ability to
induce a global state change. Moreover, at least for two-qubit systems,
it coincides with the geometric measure of discord defined as the
distance from the set of classically correlated states using the Hilbert-Schmidt
norm \cite{Dakic2010}. In the case of two-qubit systems and for any
value of the global state purity, we find that the measure is maximized
by the class of two-qubit Werner states. Furthermore, for the general
case of $m\times n$-dimensional systems, we show that the impact
power is an upper bound to the geometric discord. Finally, we will
briefly comment on the extension of the present investigation when
the Hilbert-Schmidt norm is replaced by other norms.

% Recently, the relation between local operations and quantum correlations has been discussed in \cite{Gharibian2012}, and a maximal
% global state change due to locally invariant measurements has been proposed as a novel type of measurement-induced nonlocality
% \cite{Luo2011}. However, since this effect occurs also for classically correlated states, this type of nonlocality is not of quantum nature.

% provide its
% fundamental physical interpretation in terms of the impact power of local unitary time-evolutions on the globally mixed state,
% determine the class of maximally quantum-correlated states and discuss their main properties.
% We will prove that the Werner states are the maximally quantum-correlated states at fixed purity.

% We begin by recalling that bipartite entanglement in general $M \times N$-dimensional composite quantum systems has been recently shown to
% be faithfully quantified by the change in a pure state induced by least-disturbing local unitary operations: the ensuing
% ``stellar entanglement'' has been shown to be a {\it bona fide} entanglement monotone that extends to mixed states via the convex
% roof construction~\cite{Monras2011}, a result that completes and generalizes an earlier investigation on $2 \times D$- and
% $3 \times D$-dimensional systems~\cite{Giampaolo2007}.

Let us begin by considering a bipartite quantum system composed by
two subsystems, $A$ and $B$, so that the Hilbert space ${\cal H}={\cal H}_{A}\otimes{\cal H}_{B}$.
Under the evolution driven by a local Hamiltonian $H_{A}$ acting
only subsystem $A$ the global density matrix $\rho^{AB}$ evolves
accordingly to the unitary Schrödinger dynamics:
\begin{equation}
\rho^{AB}\left(t\right)=e^{-iH_{A}t}\rho^{AB}e^{iH_{A}t}\;.\label{eq:rho}
\end{equation}
In order to quantify the effect of such a local unitary time-evolution
on any given initial global state we define the {\em impact} of
the Hamiltonian $H_{A}$ as the Hilbert-Schmidt distance between the
evolved state at time $t$ and the initial state:
\begin{equation}
I\left(\rho^{AB},H_{A},t\right)=\frac{1}{2}\left\Vert \rho^{AB}\left(t\right)-\rho^{AB}\right\Vert ^{2}\;,\label{eq:I}
\end{equation}
where $\left\Vert \rho-\sigma\right\Vert ^{2}=\mathrm{Tr}[(\rho-\sigma)^{2}]$
is the Hilbert-Schmidt distance. The impact vanishes if the time evolution
does not affect the initial state as in the trivial cases in which
either $t=0$ or $H_{A}\propto\openone_{A}$. On the other hand it
can never exceed unity, as it can be seen by noticing that for any
two arbitrarily chosen quantum states $\rho$ and $\gamma$ one has
$\frac{1}{2}\left\Vert \rho-\gamma\right\Vert ^{2}=\frac{1}{2}\left(\mathrm{Tr}\left[\rho^{2}\right]+\mathrm{Tr}\left[\gamma^{2}\right]-2\mathrm{Tr}\left[\rho\gamma\right]\right)\leq\frac{1}{2}\left(\mathrm{Tr}\left[\rho^{2}\right]+\mathrm{Tr}\left[\gamma^{2}\right]\right)\le1$.
The above inequality also implies that the impact reaches unity if
and only if the time evolution driven by $H_{A}$ takes an initial
pure state into another pure state orthogonal to it.

Given the Hamiltonian $H_{A}$ and the initial state $\rho^{AB}$,
we aim to determine the maximum possible value of the impact $I$
with respect to time $t$. Hence, we introduce the \emph{impact power}
$P$ of a Hamiltonian $H_{A}$ with respect to the initial state $\rho^{AB}$:
\begin{equation}
P\left(\rho^{AB},H_{A}\right)=\max_{t}I\left(\rho^{AB},H_{A},t\right)\;.
\end{equation}
If $H_{A}$ is trivial, i.e. $H_{A}\propto\openone_{A}$, then $P\left(\rho^{AB},H_{A}\right)\equiv0$.
Let us consider the case in which $A$ is a qubit while $B$ can be
any $d$-dimensional system. Any nontrivial local Hamiltonians $H_{A}$
can then be written as $H_{A}=E_{0}\Pi_{0}^{A}+E_{1}\Pi_{1}^{A}$
where $E_{0}\neq E_{1}$ are the two nondegenerate energy eigenvalues
and $\Pi_{i}^{A}$ are the orthogonal projectors onto the two energy
eigenstates $\ket{0}$ and $\ket{1}$. With this expression of $H_{A}$
the impact power reads
\begin{equation}
P\left(\rho^{AB},H_{A}\right)=\max_{t}\left\{ a-b\cos\left(\Delta Et\right)\right\} \;,\label{eq:P2}
\end{equation}
where the energy gap $\Delta E=E_{1}-E_{0}$ and the time-independent
quantities $a$ and $b$ are
\begin{eqnarray}
a & = & \mathrm{Tr}\left[\left(\rho^{AB}\right)^{2}\right]-\mathrm{Tr}\left[\rho^{AB}\sum_{i=0}^{1}\Pi_{i}^{A}\rho^{AB}\Pi_{i}^{A}\right]\;;\label{eq:a}\\
b & = & 2\mathrm{Tr}\left[\rho^{AB}\Pi_{1}^{A}\rho^{AB}\Pi_{0}^{A}\right]\;.\label{eq:b}
\end{eqnarray}
Notice that $b$ is nonnegative, since it can be written as $2\mathrm{Tr}\left[XX^{\dagger}\right]$
with $X=\Pi_{0}^{A}\rho^{AB}\Pi_{1}^{A}$. The fact that $a$ and
$b$ are constants and $b\geq0$ implies that the impact reaches its
maximum $a+b$ at times $t_{\mathrm{max}}^{(k)}=\frac{(2k+1)\pi}{\Delta E}$,
with $k$ integer. Exploiting completeness, $\sum_{i}\Pi_{i}^{A}=\openone_{A}$,
one has $\mathrm{Tr}\left[\left(\rho^{AB}\right)^{2}\right]=\mathrm{Tr}\left[\rho^{AB}\left(\Pi_{0}^{A}+\Pi_{1}^{A}\right)\rho^{AB}\left(\Pi_{0}^{A}+\Pi_{1}^{A}\right)\right]$.
As a consequence, $a=b$. Indeed, this result can be
obtained straightforwardly from Eq. \ref{eq:P2} by setting $t=0$ and reminding that
at $t=0$ it must be $P=0$. Exploiting the equality $a=b$, we then have:
\begin{equation}
P\left(\rho^{AB},H_{A}\right)\!=\!2\left\{ \mathrm{Tr}\left[\left(\rho^{AB}\right)^{2}\right]-\mathrm{Tr}\left[\rho^{AB}\sum_{i=0}^{1}\Pi_{i}^{A}\rho^{AB}\Pi_{i}^{A}\right]\right\} \,.\label{Impact_power}
\end{equation}
The impact power $P$ cannot exceed unity and one has strictly $P<1$
if the initial state is mixed. Maximizing over all $H_{A}$ we can
define the maximal possible impact power for any given initial state
$\rho^{AB}$ as $P_{\max}\left(\rho^{AB}\right)\!=\!\max_{H_{A}}\! P\left(\rho^{AB},H_{A}\right)$.
From this definition it follows immediately that $P_{\max}\left(\rho^{AB}\right)<1$
for all mixed states. On the other hand, % the impact power vanishes for any state $\rho^{AB}$, if the Hamiltonian $H_{A}$ is proportional to the identity: $H_{A}\propto\openone_{A}$.
% On the other hand,
it is known that an initial \textit{pure} state is a product state
if and only if there exists at least one local unitary traceless operation
that leaves it invariant~\cite{Giampaolo2007,Monras2011}. For any given initial
state $\rho^{AB}$ we can then introduce the smallest possible impact power
$P_{\min}\left(\rho^{AB}\right)$, defined by minimizing $P$ over all local
Hamiltonians that are not proportional to the identity:
\begin{equation}
P_{\min}\left(\rho^{AB}\right) =
\underset{H_{A}\neq\alpha\openone_{A}}{\min}P(\rho^{AB},H_{A}) \; .
\end{equation}
It is evident from the definition that $P_{min}\left(\rho^{AB}\right)$
vanishes if and only if $\rho^{AB}$ is a product pure state. For entangled
pure states $P_{\min}\left(\rho^{AB}\right)$ cannot vanish because,
due to the presence of the entanglement, any local perturbation acting
on a subsystem will affect the entire system. Starting from this result,
when we move from the case of pure entangles states to that of mixed
nonclassical states we find a similar behavior, but for the important
difference that the role previously played by the entanglement is now
played by the quantum correlations. Indeed, we will now show that $P_{\min}\left(\rho^{AB}\right)$
is directly related to a well defined measure of bipartite quantum
correlations, that is, the % Considering both this two limits one can argue for any given state $\rho^{AB}$ the impact power can take any value in the
% range $[0,P_{\mathrm{max}}\left(\rho^{AB}\right)]$  but is not the case.
% In fact for a very general state there exists a finite \emph{ impact power gap}, as illustrated in Fig.~\ref{fig:potential gap}, so that the
% impact power varies from a minimal nonvanishing value to  $P_{\mathrm{max}}$. As will be clarified in the following, the reason for this
% counterintuitive phenomenon is the existence of quantum correlations,
%quantified by
geometric measure of discord $D_{A}^{\left(2\right)}\left(\rho^{AB}\right)$~\cite{Dakic2010},
defined as:
\begin{equation}
D_{A}^{\left(2\right)}\left(\rho^{AB}\right)=\min_{\omega^{AB}\in CQ}\left\Vert \rho^{AB}-\omega^{AB}\right\Vert ^{2}\;.
\end{equation}
In the definition of the geometric discord the minimization is taken
over the set $CQ$ of all classically correlated states,
that is states of the form $\omega^{AB}=\sum_{i}p_{i}\ket{i}\bra{i}^{A}\otimes\omega_{i}^{B}$
where $\omega_{i}^{B}$ is a state on subsystem $B$. Using Eq.~(\ref{Impact_power})
together with the equality $\mathrm{Tr}[\rho^{AB}\sum_{i=0}^{1}\Pi_{i}^{A}\rho^{AB}\Pi_{i}^{A}]=\mathrm{Tr}[(\sum_{i=0}^{1}\Pi_{i}^{A}\rho^{AB}\Pi_{i}^{A})^{2}]$
one can immediately verify by inspection that for any nondegenerate
single-qubit Hamiltonian $H_{A}=E_{0}\Pi_{0}^{A}+E_{1}\Pi_{1}^{A}$
the impact power can be written as $P\left(\rho^{AB},H_{A}\right)=2\left\Vert \rho^{AB}-\sum_{i=0}^{1}\Pi_{i}^{A}\rho^{AB}\Pi_{i}^{A}\right\Vert ^{2}$.
This implies the following order relation between the impact power
and the geometric measure of discord:
\begin{equation}
P\left(\rho^{AB},H_{A}\right)\ge2D_{A}^{\left(2\right)}\left(\rho^{AB}\right)\;.\label{Impact_power2}
\end{equation}

\begin{figure}[t]
%%\begin{figure}[b]
%%\centering
\includegraphics[width=8cm]{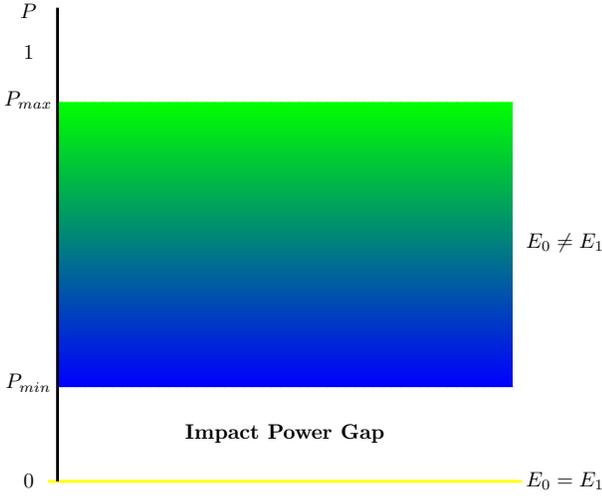}
\caption{\label{fig:potential gap} Possible values of the impact power $P$
for an arbitrary initial state $\rho^{AB}$. The impact power is zero
if the spectrum of the local Hamiltonian $H_{A}$ is degenerate: $E_{0}=E_{1}$
(yellow line). For $E_{0}\neq E_{1}$ the impact power can only take
values between $P_{\min}$ and $P_{\mathrm{max}}$ (green-blue area).
The impact power gap is the region between $0$ and $P_{\min}$. Its
width is measured by the amount of quantum correlations present in
the initial state $\rho^{AB}$, as measured by the geometric measure
of discord: $P_{\min}=2D_{A}^{\left(2\right)}$. See main text for
details.}
\label{fig1}
\end{figure}
%%\scalebox{0.8} % Change this value to rescale the drawing.
%%{
%%\begin{pspicture}(1,-3)(6,3)
%%\definecolor{color1}{rgb}{0.35294117647058826,0.592156862745098,1.0}
%%\definecolor{color2}{rgb}{0.058823529411764705,0.9333333333333333,0.42745098039215684}
%%\definecolor{color3}{rgb}{0.9568627450980393,0.8431372549019608,0.054901960784313725}
%%\psline[linewidth=0.04cm, linecolor=color3](1,-3)(6,-3)
%%\psframe[linestyle=none,fillstyle=gradient,gradlines=100,gradbegin=color1,
%%gradend=color2,gradmidpoint=1.0](1,-2)(6,0.5)
%%\rput(6.8,-0.75){$E_0 \neq E_1$}
%%\rput(0.7,-3){0}
%%\psline[linewidth=0.04cm,linecolor=color3](0.9,-3)(1,-3)
%%\rput(6.8,-3){$E_0=E_1$}
%%\rput(0.7,1.2){1}
%%\psline[linewidth=0.04cm](0.9,1.2)(1,1.2)
%%\rput(0.5,0.5){$P_{\mathrm{max}}$}
%%\psline[linewidth=0.04cm,linecolor=color1](0.9,0.47)(1,0.47)
%%\rput(0.5,-2){$P_{\mathrm{min}}$}
%%\psline[linewidth=0.04cm,linecolor=color2](0.9,-1.97)(1,-1.97)
%%\rput(0.7,2){$P$}
%%\psline[linewidth=0.04cm,arrowsize=0.05291667cm 2.0,arrowlength=1.4,arrowinset=0.4]{->}(1,-3.02)(1,2)
%%\rput(3.5,-2.5){impact power gap}
%%\end{pspicture}
%%}

Eq.~(\ref{Impact_power2}) shows that the change in the global state
due to a local unitary dynamics is bounded from below by the geometric
measure of discord and hence cannot vanish in the presence of quantum
correlations. Actually, one can prove a much stronger relation between
the minimum impact power $P_{min}$, that from now will be named the
{\em impact power gap}, and the geometric measure of
discord according to the following theorem:

%: the impact power and
%the nonlocality are always nonvanishing in the presence of quantum
%correlations.
% For initially quantum-correlated states, all nontrivial
% local Hamiltonians produce a global state change.

\begin{thm}
% Let us now define the impact power gap as the minimum of the impact power over all possible unitary operation with a non degenerate
% spectrum $P_{\min}(\rho^{AB})$. If the subsystem $A$ is a qubit, the
% minimization is done over all Hamiltonians $H_{A}$ which are not
% proportional to the identity (the case of general subsystems $A$
% will be treated below):
% \begin{equation}
% P_{\min}\left(\rho^{AB}\right)
% =\min_{H_{A}\neq\alpha\cdot\openone_{A}}\left[P\left(\rho^{AB},H_{A}\right)\right] \; .
% \label{def:IPG}
% \end{equation}
% The connection between the impact power gap $P_{\min}$ and quantum
% correlations appears immediate from Eq. (\ref{Impact_power2}):
% $P_{\min}\left(\rho^{AB}\right)\geq2D_{A}^{\left(2\right)}\left(\rho^{AB}\right)$.
% In fact, the inequality is, indeed, an equality:
\label{thm:1} If $\rho^{AB}$ is a state of a bipartite system, where
subsystem $A$ is a qubit, then the impact power gap $P_{min}$ is
given by:
\begin{equation}
P_{\min}\left(\rho^{AB}\right)=2D_{A}^{\left(2\right)}\left(\rho^{AB}\right)\;.\label{eq:gap}
\end{equation}

\end{thm}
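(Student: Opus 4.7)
The lower bound $P_{\min}(\rho^{AB}) \ge 2 D_A^{(2)}(\rho^{AB})$ is already in hand: we showed in Eq.~(\ref{Impact_power2}) that for every nondegenerate single-qubit $H_A$ the impact power coincides with $2\|\rho^{AB}-\sum_i \Pi_i^A \rho^{AB}\Pi_i^A\|^2$, and the state $\sum_i \Pi_i^A \rho^{AB}\Pi_i^A$ is manifestly classically correlated, so this quantity is at least $2 D_A^{(2)}(\rho^{AB})$. The task is therefore to establish the matching upper bound $P_{\min} \le 2 D_A^{(2)}$.

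My plan is to show the stronger structural statement that the classical-quantum state nearest to $\rho^{AB}$ in Hilbert--Schmidt norm can always be taken to be of the ``dephasing'' form $\sum_i \Pi_i^A \rho^{AB}\Pi_i^A$ for some orthonormal basis $\{|0\rangle,|1\rangle\}$ of $A$. First I would fix a basis $\{|i\rangle\}$ of $A$ and write an arbitrary $CQ$ state compatible with it as $\omega^{AB}=\sum_{i}\Pi_{i}^{A}\otimes\tilde{\omega}_{i}^{B}$, where the $\tilde{\omega}_i^B$ are positive operators on $B$ constrained only by $\sum_i \mathrm{Tr}\tilde{\omega}_i^B=1$. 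Expanding the Hilbert--Schmidt distance, and using the identities $\mathrm{Tr}[\omega^{AB}\,\rho^{AB}]=\sum_i \mathrm{Tr}[\tilde{\omega}_i^B \sigma_i^B]$ and $\mathrm{Tr}[(\omega^{AB})^{2}]=\sum_i \mathrm{Tr}[(\tilde{\omega}_i^B)^{2}]$, with $\sigma_i^B \equiv \langle i|\rho^{AB}|i\rangle_A$, one obtains
\begin{equation}
\|\rho^{AB}-\omega^{AB}\|^{2}
=\mathrm{Tr}[(\rho^{AB})^{2}]-\sum_i \mathrm{Tr}[(\sigma_i^B)^{2}]+\sum_i \|\tilde{\omega}_i^B-\sigma_i^B\|^{2}.
\end{equation}
The first two terms do not depend on the $\tilde{\omega}_i^B$, and the unconstrained minimum $\tilde{\omega}_i^B=\sigma_i^B$ is automatically positive (since $\rho^{AB}\ge 0$ implies $\sigma_i^B\ge 0$) and automatically normalized (since $\sum_i \mathrm{Tr}\,\sigma_i^B =\mathrm{Tr}\,\rho^{AB}=1$), hence is admissible.

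Consequently, for each basis the optimal $CQ$ approximant is $\sum_i \Pi_i^A \rho^{AB}\Pi_i^A$, and minimizing the geometric discord further reduces to minimizing over the choice of orthonormal basis of the qubit $A$:
\begin{equation}
D_A^{(2)}(\rho^{AB})=\min_{\{\Pi_i^A\}}\Bigl\|\rho^{AB}-\sum_i \Pi_i^A \rho^{AB}\Pi_i^A\Bigr\|^{2}.
\end{equation}
Since this minimization range is exactly the same parametrization that governs the impact power of nondegenerate single-qubit Hamiltonians (the eigenvalues $E_0,E_1$ drop out once the $\cos(\Delta E\,t)$ in Eq.~(\ref{eq:P2}) is saturated), combining with Eq.~(\ref{Impact_power}) yields $P_{\min}(\rho^{AB})=2 D_A^{(2)}(\rho^{AB})$, which together with the lower bound closes the argument.

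\textbf{Main obstacle.} The non-routine step is the reduction of $D_A^{(2)}$ to dephasing-type approximants: a priori the optimal $\tilde{\omega}_i^B$ must be positive and jointly normalized, so the minimization is constrained, and one might fear that the unconstrained optimum $\sigma_i^B$ fails one of these conditions. The key observation that makes the argument go through cleanly is that the diagonal blocks $\sigma_i^B=\langle i|\rho^{AB}|i\rangle_A$ inherit positivity from $\rho^{AB}$ and that their traces sum to $\mathrm{Tr}\,\rho^{AB}=1$ by construction, so the constraints are satisfied for free and the completion-of-squares identity directly delivers the reduction. The fact that $A$ is a qubit enters through the use of the single-qubit form of $H_A$ in Eq.~(\ref{Impact_power}); for higher-dimensional $A$ the same strategy would lower-bound $P_{\min}$ by $2 D_A^{(2)}$ but not necessarily reproduce equality, consistent with the upper-bound statement mentioned in the introduction.
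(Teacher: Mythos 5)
Your proposal is correct and follows essentially the same route as the paper's proof: both arguments hinge on the identity $D_{A}^{\left(2\right)}\left(\rho^{AB}\right)=\min_{\left\{ \Pi_{i}^{A}\right\} }\bigl\Vert \rho^{AB}-\sum_{i}\Pi_{i}^{A}\rho^{AB}\Pi_{i}^{A}\bigr\Vert ^{2}$ of Eq.~(\ref{LuoRelation}), with the lower bound $P\geq2D_{A}^{\left(2\right)}$ supplied by Eq.~(\ref{Impact_power2}) and the matching upper bound obtained by building the nondegenerate Hamiltonian $H_{A}=E_{0}\hat{\Pi}_{0}^{A}+E_{1}\hat{\Pi}_{1}^{A}$ from the optimal projectors, whose impact power equals $2D_{A}^{\left(2\right)}\left(\rho^{AB}\right)$. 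The only difference is that the paper simply cites Ref.~\cite{Luo2010} for Eq.~(\ref{LuoRelation}), whereas you rederive it via a correct completion-of-squares argument (the diagonal blocks $\langle i|\rho^{AB}|i\rangle_{A}$ inheriting positivity and joint normalization, so the constrained and unconstrained minima coincide), which makes your proof self-contained and in fact establishes the lemma for arbitrary local dimension of $A$.
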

\begin{proof} We will prove this equality by identifying a Hamiltonian
which explicitly minimizes the impact power $P\left(\rho^{AB},H_{A}\right)$.
To this end, it is useful to recall that the geometric measure of
discord is related to local von Neumann measurements, with local projectors
$\Pi_{i}^{A}$, according to the following~\cite{Luo2010}:
\begin{equation}
D_{A}^{\left(2\right)}\left(\rho^{AB}\right)=\min_{\left\{ \Pi_{i}^{A}\right\} }\left\Vert \rho^{AB}-\sum_{i}\Pi_{i}^{A}\rho^{AB}\Pi_{i}^{A}\right\Vert ^{2}\;.\label{LuoRelation}
\end{equation}
Let now $\hat{\Pi}_{0}^{A}$ and $\hat{\Pi}_{1}^{A}$ be the projectors
that achieve the minimum and consider the Hamiltonian $H_{A}=E_{0}\hat{\Pi}_{0}^{A}+E_{1}\hat{\Pi}_{1}^{A}$
with nondegenerate spectrum $E_{1}\neq E_{0}$. Evaluating the impact
power of $H_{A}$ along the same lines discussed in the cases above
yields $P\left(\rho^{AB},H_{A}\right)=2D_{A}^{\left(2\right)}\left(\rho^{AB}\right)$.
\end{proof}

Theorem~\ref{thm:1} exemplifies the relation between the impact
power gap and quantum correlations (see also Fig.~\ref{fig:potential gap}).
If subsystem $A$ is a qubit, then $P_{\min}$ can be computed explicitly
by exploiting Theorem~\ref{thm:1} and the explicit expression for
$D_{A}^{\left(2\right)}$ provided in Refs.~\cite{Dakic2010,Vinjanampathy2012}.
In fact, we can go one step further and provide independent closed
expressions both for $P_{\mathrm{min}}$ and for the maximal impact
power $P_{\mathrm{max}}$ in terms of the global state \textit{purity}:
\begin{thm} \label{thm:3} If system $A$ is a qubit, the maximal
impact power $P_{\mathrm{max}}$ reads
\begin{eqnarray}
P_{\mathrm{max}}\left(\rho^{AB}\right) & = & \mathrm{Tr}\left[\left(\rho^{AB}\right)^{2}\right]-m_{\mathrm{min}}\;,\label{eq:Pmax-1}
\end{eqnarray}
where $m_{\mathrm{min}}$ is the smallest eigenvalue of the matrix
$M$ with elements $M_{ij}=\mathrm{Tr}\left[\rho^{AB}\sigma_{i}^{A}\rho^{AB}\sigma_{j}^{A}\right]$,
where $\sigma_{i}^{A}$ (with $i=x,\, y,\, z$) are the Pauli operators
of subsystem $A$. Moreover, given the largest eigenvalue $m_{\mathrm{max}}$
of the matrix $M$, the impact power gap $P_{\mathrm{min}}$ reads
\begin{eqnarray}
P_{\mathrm{min}}\left(\rho^{AB}\right) & = & \mathrm{Tr}\left[\left(\rho^{AB}\right)^{2}\right]-m_{\mathrm{max}}\;.\label{eq:Pmin-1}
\end{eqnarray}
\end{thm}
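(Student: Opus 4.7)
The plan is to reduce the optimization over local qubit Hamiltonians to a standard quadratic-form optimization on the unit sphere, which can then be solved by the variational characterization of the eigenvalues of a symmetric matrix. My starting point is Eq.~(\ref{Impact_power}): since that formula depends on $H_{A}$ only through its rank-one eigenprojectors $\Pi_{0}^{A},\Pi_{1}^{A}$, and not on the actual eigenvalues $E_{0}\neq E_{1}$, the maximum and minimum of $P(\rho^{AB},H_{A})$ over nontrivial local Hamiltonians coincide with the corresponding extrema taken over all pairs of rank-one orthogonal projectors on $\mathcal{H}_{A}$.

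First, I would parametrize every such pair by a unit Bloch vector: $\Pi_{0,1}^{A}=\tfrac{1}{2}(\openone_{A}\pm\vec{n}\cdot\vec{\sigma}^{A})$ with $|\vec{n}|=1$. A short computation then gives
\begin{equation}
\sum_{i=0}^{1}\Pi_{i}^{A}\rho^{AB}\Pi_{i}^{A}=\tfrac{1}{2}\bigl[\rho^{AB}+(\vec{n}\cdot\vec{\sigma}^{A})\,\rho^{AB}\,(\vec{n}\cdot\vec{\sigma}^{A})\bigr],
\end{equation}
because the cross terms in the expansion are traceless in the Pauli basis on $A$ and cancel. Substituting into Eq.~(\ref{Impact_power}) and using the definition of $M_{ij}$, the impact power reduces to the compact expression
\begin{equation}
P(\rho^{AB},H_{A})=\mathrm{Tr}\bigl[(\rho^{AB})^{2}\bigr]-\sum_{i,j}n_{i}n_{j}\,M_{ij}=\mathrm{Tr}\bigl[(\rho^{AB})^{2}\bigr]-\vec{n}^{\,T}\!M\vec{n}.
\end{equation}

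Second, I would verify that $M$ is a real symmetric (in fact positive semidefinite) $3\times 3$ matrix: cyclicity of the trace gives $M_{ij}=\mathrm{Tr}[\sigma_{j}^{A}\rho^{AB}\sigma_{i}^{A}\rho^{AB}]=M_{ji}$, while positivity follows from $\vec{v}^{\,T}\!M\vec{v}=\mathrm{Tr}[(\rho^{1/2}(\vec{v}\cdot\vec{\sigma}^{A})\rho^{1/2})^{2}]\ge 0$. Symmetry guarantees that $M$ is orthogonally diagonalizable with real eigenvalues $m_{\min}\le m_{\mathrm{mid}}\le m_{\max}$.

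Third, by the Courant--Fischer (Rayleigh quotient) principle, as $\vec{n}$ ranges over the unit sphere $S^{2}$, the quadratic form $\vec{n}^{\,T}\!M\vec{n}$ attains its minimum $m_{\min}$ and maximum $m_{\max}$ at the corresponding eigenvectors. Since nontriviality of $H_{A}$ exactly corresponds to $|\vec{n}|=1$ (rather than $\vec{n}=0$), maximizing $P$ over $H_{A}$ means minimizing $\vec{n}^{\,T}\!M\vec{n}$, giving Eq.~(\ref{eq:Pmax-1}), while minimizing $P$ over nontrivial $H_{A}$ means maximizing $\vec{n}^{\,T}\!M\vec{n}$, giving Eq.~(\ref{eq:Pmin-1}).

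There is no real obstacle here: the only step that requires care is the algebraic simplification in the first paragraph, where one must check that the non-diagonal Pauli contributions in $(\vec{n}\cdot\vec{\sigma}^{A})\,\rho^{AB}\,(\vec{n}\cdot\vec{\sigma}^{A})$ combine with $\mathrm{Tr}[(\rho^{AB})^{2}]$ to yield precisely the quadratic form in $M$. Everything else is a standard application of the variational characterization of extremal eigenvalues of a real symmetric matrix.
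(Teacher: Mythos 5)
Your proof is correct and takes essentially the same route as the paper's: both reduce the extremization over nontrivial qubit Hamiltonians, via the Bloch parametrization of the eigenprojectors, to the quadratic form $P=\mathrm{Tr}[(\rho^{AB})^{2}]-\vec{n}^{\,T}M\vec{n}$ on the unit sphere (the paper phrases the intermediate step through the unitary $U_{A}=\Pi_{0}^{A}-\Pi_{1}^{A}=\sum_{i}r_{i}\sigma_{i}^{A}$, which is algebraically identical to your identity $\sum_{i}\Pi_{i}^{A}\rho^{AB}\Pi_{i}^{A}=\tfrac{1}{2}[\rho^{AB}+(\vec{n}\cdot\vec{\sigma}^{A})\rho^{AB}(\vec{n}\cdot\vec{\sigma}^{A})]$), and then both invoke the extremal-eigenvalue (Rayleigh-quotient) characterization for the real symmetric matrix $M$. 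One cosmetic remark: the cross terms in your projector identity cancel because they appear with opposite signs in $\Pi_{0}^{A}\rho^{AB}\Pi_{0}^{A}$ and $\Pi_{1}^{A}\rho^{AB}\Pi_{1}^{A}$, not because they are traceless, but this does not affect the argument.
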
 \begin{proof} Since the impact power is identically vanishing if the single-qubit Hamiltonian $H_A$ is degenerate, we need consider only the nondegenerate case. The unitary operator $U_A=e^{iH_At_{\rm max}^{(0)}}$ is then traceless with spectrum composed by the two complex roots of the unity. Let us recall Eq.~(\ref{Impact_power}) for the impact power
$P\left(\rho^{AB},H_{A}\right)$ and the fact that we can always rewrite
a local unitary operator in the form $U_{A}=\Pi_{0}^{A}-\Pi_{1}^{A}$.
We can then express the impact power as follows:
\begin{equation}
P\left(\rho^{AB},H_{A}\right)=\mathrm{Tr}\left[\left(\rho^{AB}\right)^{2}\right]-\mathrm{Tr}\left[\rho^{AB}U_{A}\rho^{AB}U_{A}^{\dagger}\right].\label{eq:P-1-1}
\end{equation}
Using the Bloch representation to write the projectors as $\Pi_{0}^{A}=\frac{1}{2}\left(\openone_{A}+\sum_{i}r_{i}\sigma_{i}^{A}\right)$
and $\Pi_{1}^{A}=\frac{1}{2}\left(\openone_{A}-\sum_{i}r_{i}\sigma_{i}^{A}\right)$,
the unitary operator $U_{A}$ in Eq.~(\ref{eq:P-1-1}) takes the
form $U_{A}=\Pi_{0}^{A}-\Pi_{1}^{A}=\sum_{i}r_{i}\sigma_{i}^{A}$.
The final expression for the impact power becomes
\begin{eqnarray}
P\left(\rho^{AB},H_{A}\right) & = & \mathrm{Tr}\left[\left(\rho^{AB}\right)^{2}\right]-\sum_{i,j}r_{i}M_{ij}r_{j}\;,\label{eq:P-2-1}
\end{eqnarray}
where we defined the matrix $M$ with the elements $M_{ij}=\mathrm{Tr}\left[\rho^{AB}\sigma_{i}^{A}\rho^{AB}\sigma_{j}^{A}\right]$.
It is easy to see that $M$ is symmetric, since $M_{ij}=M_{ji}$.
Moreover, all entries of $M$ are real. This implies that in order
to compute $P_{\max}$ we have to minimize $\boldsymbol{r}^{T}M\boldsymbol{r}$
over all unit vectors $\boldsymbol{r}$ for a real symmetric matrix
$M$. This problem is solved by finding the smallest eigenvalue of
$M$~\cite{Horn1990}. The impact power gap $P_{\min}$ can be computed
similarly by considering the largest eigenvalue of $M$. \end{proof}
By continuity in the Bloch vector $\boldsymbol{r}$, the impact power
$P\left(\rho^{AB},H_{A}\right)$ may assume any real value in the
range $[P_{\min},P_{\max}]$.

Equipped with these results, we can look for the class of states that,
at fixed global purity, maximize the impact power gap and thus the
quantum correlations. When both subsystems are qubits ($d_{A}=d_{B}=2$),
the following theorem holds: \begin{thm} For any state $\rho^{AB}$
of two qubits
\begin{equation}
P_{\min}\left(\rho^{AB}\right)\leq\frac{4}{3}\mathrm{Tr}\left[\left(\rho^{AB}\right)^{2}\right]-\frac{1}{3}\;,\label{eq:bound}
\end{equation}
with equality achieved by the Werner states $\rho_{w}$. \end{thm}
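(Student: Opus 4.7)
The plan is to combine Theorem~\ref{thm:3} with the standard single-qubit Pauli identity to reduce the bound to a statement about the purity of the reduced state $\rho^{B}$. By Theorem~\ref{thm:3}, $P_{\min}(\rho^{AB})=\mathrm{Tr}[(\rho^{AB})^{2}]-m_{\max}$, where $m_{\max}$ is the largest eigenvalue of the real symmetric $3\times 3$ matrix $M$ with entries $M_{ij}=\mathrm{Tr}[\rho^{AB}\sigma_{i}^{A}\rho^{AB}\sigma_{j}^{A}]$. Since for any real symmetric $3\times 3$ matrix the largest eigenvalue is at least the average of the eigenvalues, $m_{\max}\ge\tfrac{1}{3}\mathrm{Tr}(M)$, so it suffices to lower-bound $\mathrm{Tr}(M)$.

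Next I would compute $\mathrm{Tr}(M)$ in closed form using the single-qubit identity $\sum_{i}\sigma_{i}X\sigma_{i}=2\mathrm{Tr}(X)\openone-X$, valid for any $2\times 2$ matrix $X$. Tensoring with $\openone_{B}$ and applying this to $\rho^{AB}$ yields $\sum_{i}(\sigma_{i}^{A}\otimes\openone_{B})\,\rho^{AB}\,(\sigma_{i}^{A}\otimes\openone_{B})=2\,\openone_{A}\otimes\rho^{B}-\rho^{AB}$, so that
\begin{equation}
\mathrm{Tr}(M)=2\,\mathrm{Tr}[(\rho^{B})^{2}]-\mathrm{Tr}[(\rho^{AB})^{2}].\nonumber
\end{equation}
Since $B$ is a qubit one has $\mathrm{Tr}[(\rho^{B})^{2}]\ge 1/2$, hence $\mathrm{Tr}(M)\ge 1-\mathrm{Tr}[(\rho^{AB})^{2}]$. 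Chaining the inequalities gives $m_{\max}\ge\tfrac{1}{3}(1-\mathrm{Tr}[(\rho^{AB})^{2}])$ and therefore $P_{\min}(\rho^{AB})\le\tfrac{4}{3}\mathrm{Tr}[(\rho^{AB})^{2}]-\tfrac{1}{3}$, which is the desired bound \eqref{eq:bound}.

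For the equality case I would exploit the defining $U\otimes U$ invariance of two-qubit Werner states: $(U\otimes U)\rho_{w}(U^{\dagger}\otimes U^{\dagger})=\rho_{w}$ for every $U\in SU(2)$. This immediately forces $\rho_{w}^{B}=\openone_{B}/2$, saturating the purity bound $\mathrm{Tr}[(\rho^{B})^{2}]\ge 1/2$. Moreover, the transformation $\sigma_{i}^{A}\mapsto U\sigma_{i}^{A}U^{\dagger}=\sum_{k}R_{ki}\sigma_{k}^{A}$ with $R$ running over all of $SO(3)$ via the double cover $SU(2)\to SO(3)$, together with $U\otimes U$ invariance of $\rho_{w}$, yields $M=R^{T}MR$ for every $R\in SO(3)$. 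The only symmetric matrix invariant under the full orthogonal conjugation is a multiple of the identity, so $M=\lambda\openone_{3}$ and $m_{\max}=\tfrac{1}{3}\mathrm{Tr}(M)$. Both inequalities used in the bound thus collapse to equalities for $\rho_{w}$.

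The main conceptual step is the equality analysis: verifying that $M\propto\openone_{3}$ on Werner states requires the symmetry argument via the $SU(2)\to SO(3)$ covering; the rest is routine algebra built on the Pauli identity and the elementary fact $m_{\max}\ge\mathrm{Tr}(M)/3$.
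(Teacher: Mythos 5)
Your proof is correct, and it takes a genuinely different route from the paper's. The paper works in the Bloch representation: it imports the closed-form geometric discord of Daki\'c \emph{et al.} to write $P_{\min}=\frac{1}{2}\left(\boldsymbol{x}^{2}+\left\Vert T\right\Vert ^{2}-k_{\max}\right)$ with $K=\boldsymbol{x}\boldsymbol{x}^{T}+TT^{T}$, uses $3k_{\max}\geq\mathrm{Tr}\left(K\right)=\boldsymbol{x}^{2}+\left\Vert T\right\Vert ^{2}$, re-expresses this trace through the global and reduced purities, and then verifies saturation for Werner states by explicit computation, $\mathrm{Tr}\left[\rho_{w}^{2}\right]=\frac{1}{3}\left(x^{2}-x+1\right)$ and $D_{A}^{\left(2\right)}\left(\rho_{w}\right)=\frac{\left(2x-1\right)^{2}}{18}$ from the Luo--Fu formula. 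You instead stay entirely inside the paper's own Theorem~\ref{thm:3}, $P_{\min}=\mathrm{Tr}[(\rho^{AB})^{2}]-m_{\max}$, and obtain $\mathrm{Tr}\left(M\right)=2\,\mathrm{Tr}[(\rho^{B})^{2}]-\mathrm{Tr}[(\rho^{AB})^{2}]$ representation-free via the twirling identity $\sum_{i}\sigma_{i}X\sigma_{i}=2\mathrm{Tr}\left(X\right)\openone-X$; all your steps check out (the skeleton is the same in both proofs: largest eigenvalue of a symmetric $3\times3$ matrix is at least one third of its trace, plus $\mathrm{Tr}[(\rho^{B})^{2}]\geq\frac{1}{2}$, and indeed $M=\frac{1}{2}K+\frac{1}{4}\left(1-\boldsymbol{x}^{2}+\boldsymbol{y}^{2}-\left\Vert T\right\Vert ^{2}\right)\openone$, as one sees by equating the two expressions for $P\left(\rho^{AB},H_{A}\right)$ direction by direction, so the two trace bounds are equivalent). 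The sharper divergence is the equality case: where the paper computes, you argue from $U\otimes U$ invariance that $\rho_{w}^{B}=\openone_{B}/2$ and, via the $SU(2)\rightarrow SO(3)$ cover and Schur's lemma, that $M\propto\openone_{3}$, so both inequalities saturate simultaneously without ever evaluating the discord of $\rho_{w}$. This buys more than elegance: your symmetry argument needs no external discord formulas, it exhibits the two saturation conditions explicitly (maximally mixed marginal on $B$ and isotropic $M$), and it applies verbatim to any state invariant under conjugation by arbitrary $U$ on side $A$ tensored with some compensating unitary on $B$ --- in particular to the $U\otimes U^{*}$-invariant isotropic states, whose $B$-marginal is likewise maximally mixed and whose $M$ is likewise forced to be scalar --- thereby directly confirming the paper's closing speculation that isotropic states also saturate Ineq.~(\ref{eq:bound}).
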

\begin{proof} In the Bloch sphere representation any arbitrary two-qubit
state can be written as:
\begin{equation}
\rho^{AB}\!\!=\!\!\frac{1}{4}\left(\!\!\openone\!\otimes\!\openone\!+\!\!\sum_{i}x_{i}\sigma_{i}\!\otimes\!\openone\!+\!\!\sum_{i}y_{i}\!\openone\!\otimes\!\sigma_{i}\!+\!\!\sum_{ij}T_{ij}\sigma_{i}\!\otimes\!\sigma_{j}\right)\,,\label{general_two_qubit}
\end{equation}
and the state purity $\mathrm{Tr}\left[(\rho^{AB})^{2}\right]$ can
be expressed as $\mathrm{Tr}\left[(\rho^{AB})^{2}\right]=\frac{1}{4}\left(1+\boldsymbol{x}^{2}+\boldsymbol{y}^{2}+\left\Vert T\right\Vert ^{2}\right)$.
By tracing out the first or the second qubit, the purities of the
reduced states are, respectively, $\mathrm{Tr}\left[(\rho^{B})^{2}\right]=\frac{1}{2}\left(1+\boldsymbol{y}^{2}\right)$
and $\mathrm{Tr}\left[(\rho^{A})^{2}\right]=\frac{1}{2}\left(1+\boldsymbol{x}^{2}\right)$.
Using representation Eq.~(\ref{general_two_qubit}), it is possible
to evaluate the geometric measure of discord for any two-qubit state~\cite{Dakic2010},
and hence the expression for $P_{\min}$:
\begin{equation}
P_{\min}\left(\rho^{AB}\right)=\frac{1}{2}\left(\boldsymbol{x}^{2}+\left\Vert T\right\Vert ^{2}-k_{\max}\right)\;,\label{eq:D-1}
\end{equation}
where $k_{\max}$ is the largest eigenvalue of the matrix $K=\boldsymbol{x}\boldsymbol{x}^{T}+TT^{T}$,
and $\left\Vert T\right\Vert ^{2}=\mathrm{Tr}\left[T^{T}T\right]$.
Since $k_{\max}$ is the largest eigenvalue of the $3\times3$ matrix
$K$, we have that $3k_{\max}\geq\boldsymbol{x}^{2}+\left\Vert T\right\Vert ^{2}$.
Using this inequality in Eq.~(\ref{eq:D-1}) and taking into account
the expressions of the global and reduced purities, we have:
\begin{eqnarray}
P_{\min}\left(\rho^{AB}\right) & \leq & \frac{1}{3}\left(\boldsymbol{x}^{2}+\left\Vert T\right\Vert ^{2}\right)\nonumber \\
 & = & \frac{4}{3}\left(\mathrm{Tr}\left[\left(\rho^{AB}\right)^{2}\right]-\frac{1}{2}\mathrm{Tr}\left[\left(\rho^{B}\right)^{2}\right]\right)\;.
\end{eqnarray}
Finally, noticing that for a single-qubit state the purity cannot
be smaller than $\frac{1}{2}$, we arrive at Ineq.~(\ref{eq:bound}).
On the other hand, a generic two-qubit Werner state can be written
as $\rho_{w}=\frac{2-x}{6}\openone+\frac{2x-1}{6}F$ where $x\in\left[-1,1\right]$
and $F=\sum_{k,l}\ket{k}\bra{l}\otimes\ket{l}\bra{k}$ is the permutation
operator. For such a state the purity is given by $\mathrm{Tr}\left[\rho_{w}^{2}\right]=\frac{1}{3}\left(x^{2}-x+1\right)$,
while the geometric measure of discord reads~\cite{Luo2010}: $D_{A}^{\left(2\right)}\left(\rho_{w}\right)=\frac{\left(2x-1\right)^{2}}{18}$.
Recalling the relation between the impact power gap and the geometric
discord, one has that Ineq.~(\ref{eq:bound}) is saturated by the
Werner states. Werner states are thus maximally quantum-correlated
two-qubit states at fixed global purity. \end{proof}

We could not yet clarify whether the Werner states are the only one
maximizing the two-qubit quantum correlations. Some preliminary analysis
suggests that other classes of highly symmetric states, like the isotropic
states, might also saturate the bound Eq. \ref{eq:bound}.

In order to investigate systems with larger local dimension $d_{A}>2$,
we generalize our approach considering the fully nondegenerate local
Hamiltonians of the form $H_{A}=\sum_{i=0}^{d_{A}-1}E_{i}\Pi_{i}^{A}$
with spectrum $E_{i}\neq E_{j}\,\forall\, i\neq j$. Following the
same route of reasoning as in the qubit case, we find that the impact
power of $H_{A}$ over an arbitrary initial state $\rho^{AB}$ can
be expressed as
\begin{equation}
P\left(\rho^{AB},H_{A}\right)=\max_{t}\left\{ a-\sum_{l>k}b_{lk}\cdot\cos\left(\Delta E_{lk}t\right)\right\} \;,\label{eq:P-4}
\end{equation}
where $\Delta E_{lk}=E_{l}-E_{k}$, and the coefficients $a$ and
$b_{lk}$ are
\begin{eqnarray}
a & = & \mathrm{Tr}\left[(\rho^{AB})^{2}\right]-\mathrm{Tr}\left[\rho^{AB}\sum_{i=0}^{d_{A}-1}\Pi_{i}^{A}\rho^{AB}\Pi_{i}^{A}\right]\;;\label{eq:a:2}\\
b_{lk} & = & 2\mathrm{Tr}\left[\rho^{AB}\Pi_{l}^{A}\rho^{AB}\Pi_{k}^{A}\right]\;.
\end{eqnarray}
Taking into account that $a=\sum_{l>k}b_{lk}$ we arrive at
\begin{equation}
P\left(\rho^{AB},H_{A}\right)=\max_{t}\left\{ \sum_{l>k}b_{lk}\cdot\left[1-\cos\left(\Delta E_{lk}t\right)\right]\right\} \;.\label{Ip2}
\end{equation}
Since $P\left(\rho^{AB},H_{A}\right)\geq\sum_{l>k}b_{lk}\cdot\left[1-\cos\left(\Delta E_{lk}t\right)\right]$
for all times $t\neq t_{max}$, it follows that $P\left(\rho^{AB},H_{A}\right)\geq2\cdot\max_{l>k}b_{lk}$.
Using the fact that $a=\sum_{l>k}b_{lk}\le N\max_{l>k}b_{lk}$ we
obtain that $\max_{l>k}b_{lk}\geq\frac{1}{N}\sum_{l>k}b_{lk}=\frac{a}{N}$,
where $N=(d_{A}-1)d_{A}/2$ is the number of different $b_{lk}$ terms.
Collecting these results and recalling the definition of the geometric
measure of discord $D_{A}^{\left(2\right)}\left(\rho^{AB}\right)$,
we find that the impact power of any nondegenerate, finite-dimensional
local Hamiltonian $H_{A}$ is bounded from below by a simple linear
function of the geometric measure of discord:
\begin{equation}
P\left(\rho^{AB},H_{A}\right)
\geq\frac{4D_{A}^{\left(2\right)}\left(\rho^{AB}\right)}{d_{A}\left(d_{A}-1\right)}\;.\label{eq:P-3}
\end{equation}
From Eq.(\ref{eq:P-3}), in complete analogy with the qubit case,
it follows that if the initial state has vanishing quantum correlations,
there always exists at least one nontrivial local Hamiltonian $H_{A}$
with vanishing impact power. Therefore, a nonvanishing impact power
implies and quantifies a nonvanishing degree of quantumness, regardless
of the local Hilbert space dimension of party $A$.

It is worth noticing that while throughout this paper we have made
use of the Hilbert-Schmidt norm, we are by no means limited to this
choice. Similar conclusions hold as well for the trace distance, which
is directly related to the distinguishability of quantum states~\cite{Geometry2008}.
Indeed, given two density matrices $\rho$ and $\omega$, their squared
trace distance is $(\mathrm{Tr}[\sqrt{(\rho-\omega)^{2}}])^{2}=(\sum_{i}|\lambda_{i}|)^{2}$,
where the $\{\lambda_{i}\}$ are the eigenvalues of $(\rho-\omega)$.
This quantity is obviously always larger or equal than the squared
Hilbert-Schmidt distance $\mathrm{Tr}[(\rho-\omega)^{2}]=\sum_{i}\lambda_{i}^{2}$.
Therefore, an impact power gap for quantum correlated states exists
also in the case in which we replace the Hilbert-Schmidt distance
with the trace distance, and hence similar results can be obtained
also in this case. As the latter is monotonic under general stochastic
maps, this result is relevant in the the light of a recent observation~\cite{Piani2012}
that due to the fact that the Hilbert-Schmidt distance is not monotonic
under stochastic maps, some reversible operations on unmeasured subsystem
$B$ can change the value of the quantum correlations.

In conclusion, we have % investigated
%the relation between nonlocality and quantum correlations beyond entanglement. We have
established that all the quantum correlated states of bipartite quantum
systems exhibit a nonvanishing impact power gap, i.e. a nonvanishing
minimal change under the action of {\em any} nontrivial local Hamiltonian.
On the contrary for every classically correlated state there exists
at least one particular nontrivial local unitary operation that leaves
the state unchanged. Starting from this observation we have quantified
this global change via the Hilbert-Schmidt distance, and showed that
the minimal distance achieved along the local time evolution is proportional
to the amount of quantum correlations quantified via the geometric
measure of discord. Moreover, for two-qubit systems at fixed global
purity, we have verified explicitly that Werner states maximize the
impact power gap and thus the amount of quantum correlations. We have
mainly used as measure of the effect of the local unitary operations
the Hilbert-Schmidt metrics; however, we have shown that similar results
can be obtained also using the trace distance. On the other hand, it is
expected that the detailed structure of the quantification of nonclassicality
and the characterization of maximally quantum-correlated states using the
formalism of least-perturbing local unitary operations will depend to some
extent on the choice of the metric inducing the distance between quantum
states. In this respect the choice of the Bures metric, which is at the same
time monotonic and Riemannian, seems to be the most appropriate one, also in
light of the fundamental operational meaning that stems from its intimate
relation with the Uhlmann fidelity. The general structure of distance-based
measures of quantumness associated to least-perturbing local unitary operations
defined via different norms (Bures, trace, and Hilbert-Schmidt) and their
detailed comparison are the subject of ongoing investigations and we hope to
report on them in the near future \cite{preparation}.

{\em Acknowledgements:} AS and DB acknowledge financial support
from DFG and ELES, while SMG, WR, and FI acknowledge financial support
from the EU STREP Projects HIP, Grant Agreement No. 221889, and iQIT,
Grant Agreement No. 270843.

\end{document}